\newtheorem{theorem}{Theorem}[section]
\newtheorem{proposition}{Proposition}[section]
\newcommand{\RR}{\mathbb{R}}
\newcommand{\dell}{\partial}
\newcommand{\tr}{\mathrm{tr}\,}
\newcommand{\vL}{\Delta_{\mathbb L}}
\newcommand{\calC}{{\mathcal C}}
\newcommand{\calD}{{\mathcal D}}
\newcommand{\calH}{{\mathcal H}}
\newcommand{\calN}{{\mathcal N}}
\newcommand{\calS}{{\mathcal S}}
\newcommand{\calY}{{\mathcal Y}}
\begin{document}

\title[Non-CMC Solutions on Asymptotically Euclidean Manifolds]{Non-CMC Solutions of the Einstein
Constraint Equations on Asymptotically Euclidean Manifolds}

\author[J. Dilts]{James Dilts}
\address{University of Oregon}
\email{jdilts@uoregon.edu}

\author[J. Isenberg]{Jim Isenberg}
\address{University of Oregon}
\email{jisenberg@uoregon.edu}

\author[R. Mazzeo]{Rafe Mazzeo}
\address{Stanford University}
\email{mazzeo@math.stanford.edu}

\author[C. Meier]{Caleb Meier}
\address{UC San Diego}
\email{c1meier@math.ucsd.edu}



\keywords{Einstein constraint equations, conformal method, asymptotically Euclidean manifolds}

\maketitle

\vspace*{-1.2cm}

\begin{abstract}
In this note we prove two existence theorems for the Einstein constraint equations on asymptotically
Euclidean manifolds. The first is for arbitrary mean curvature functions with restrictions on the size of the
transverse-traceless data and the non-gravitational field data, while the second assumes a near-CMC condition,
with no other restrictions.
\end{abstract}

\section{Introduction}
\label{Intro}
This paper capitalizes on several recent advances concerning the existence of solutions of the Einstein
constraint equations using the conformal method. Using these new techniques we construct solutions
of these equations, either in the vacuum setting or else with coupled non-gravitational  fields, on asymptotically Euclidean
manifolds under two separate sets of hypotheses: either the mean curvature function $\tau$ is arbitrary
but the transverse-traceless part of the data $\sigma$ (and the non-gravitational  field densities) are very small
(``far-CMC" data) or else smallness assumptions are placed on $d\tau/\tau$ (the ``near-CMC" case).  We do this
by adapting the methods of Holst-Nagy-Tsogterel \cite{closed-fcmc} and Maxwell \cite{vac-fcmc}.

We recall that a Riemannian manifold $(M,\hat g)$ and a symmetric $2$-tensor $\hat K$ on $M$ satisfy the Einstein
constraint equations  with non-gravitational energy-density $\hat \rho$ (a scalar function) and non-gravitational
momentum density $\hat J$ (a vector field) if
\begin{equation}
\begin{aligned}
|\hat K|_{\hat g}^2 - (\tr_{\hat g} \hat K)^2 & = R(\hat g) - \hat  \rho, \\
\mathrm{div}_{\hat g}\, \hat K - \nabla \tr_{\hat g} \hat K & = \hat J.
\end{aligned}
\label{vce}
\end{equation}
Note that we work here only with non-gravitational fields which require no further constraints; this is the case for fluid fields, for example. (These results easily extend to theories such as Einstein-Maxwell which do introduce extra constraints; for brevity, we do not treat such cases here.) The cosmological constant $\Lambda$ is assumed to vanish because our interest here is exclusively
with asymptotically Euclidean data.

One uses the conformal method to generate an initial data set  $(M, \hat g, \hat K, \hat \rho, \hat J)$ which satisfies the constraints \eqref{vce} by first (freely) choosing the \emph {conformal data}, which includes a Riemannian manifold $(M,g)$, a symmetric tensor $\sigma$ which is transverse ($\mathrm{div}_g \sigma =0$) and traceless ($\tr_g \sigma =0$) with respect to $g$, a scalar function $\tau$ (the mean curvature), a non-negative scalar function $\rho$, and a vector field $J$. One then seeks solutions $\phi$ (a positive scalar) and $W$ (a vector field) of the \emph{conformal constraint equations}
\begin{equation}
\begin{aligned}
(i) \ & \Delta_{g} u - c_n R_g u + c_n |\sigma + \calD W|^2_g u^{-N-1} - b_n \tau^2 u^{N-1} + c_n \rho u^{-\frac{N}{2}}= 0, \\
(ii) \ & \Delta_{\mathbb L} W + \frac{n-1}{n} u^{N} \nabla \tau + J = 0.
\end{aligned}
\label{ce2}
\end{equation}
Here $R_g$ is the scalar curvature of $g$, $\calD$ is the conformal Killing operator acting on vector fields
\[
(\calD W)_{ij} := \nabla_i W_j + \nabla_j W_i - \frac{2}{n}(\mathrm{div}_{g} W ) \, g_{ij},
\]
$ \Delta_g$ is the scalar Laplacian, $\vL := -\mathrm{div} \circ \calD$ is the vector Laplacian,
and the constants $N$,  $c_n$, and $b_n$  are given by
\[
N := \frac{2n}{n-2}, \quad c_n := \frac{n-2}{4(n-1)}, \quad b_n:= \frac{n-2}{4n}.
\]
If $(\phi,W)$ is a solution of \eqref{ce2}, then the initial data set
\[
\hat g = u^{\frac{4}{n-2}} g, \quad \hat K = u^{-2\frac{n+2}{n-2}}(\sigma + \calD W)^{ij}+\frac{\tau}{n} u^{\frac{4}{n-2}} g^{ij},\quad  \hat \rho = u^{-\frac{3}{2}N+1} \rho, \quad \hat J =u^{-N}J,
\]
satisfies the Einstein constraints \eqref{vce}.

For convenience below, if $v$ is any positive function (in a suitable function space), then we let $W(v)$ denote the
solution of equation (ii) of \eqref{ce2}, where this function $v$ is inserted on the right hand side.  Similarly, we write the Lichnerowicz
operator on the left of equation (i) of \eqref{ce2} as $\calN(u,W)$.  Thus a solution $(u,W)$ of the coupled
system \eqref{ce2} corresponds to a solution $u$ of the single nonlocal equation $\calN(u, W(u)) = 0$.

When $\tau$ is constant (the CMC case), these equations decouple, and it is possible to obtain incisive results for this case; see \cite{const-compact,const-asympt-flat}.
Similarly, when $\nabla \tau/\tau$ is suitably small (this is known as the near-CMC case), then many further results have been
obtained using  perturbation methods. The recent advances, stemming from the papers of Holst, Nagy and Tsogterel \cite{closed-fcmc},
later refined and simplified by Maxwell \cite{vac-fcmc}, treat the case in which $\tau$ is allowed to vary with no restrictions; these
results are still perturbative in a different sense because they require $\sigma, \rho$, and $J$  to be very small (this is a special
case of the general far-CMC case). These arguments rely on the Yamabe positivity of the underlying conformal class $[g]$
and on  these various smallness conditions to construct barriers. This particular far-CMC scenario has now been worked out in
several settings. The original papers treat the case where $M$ is closed; the more recent papers of Holst, Meier and Tsogtgerel
\cite{compact-bound2} and Dilts \cite{compact-bound1} treat the case where $M$ is a manifold with boundary, considering a wide range of boundary conditions;
finally, Leach \cite{asympt-cyl} has dealt with the case where $M$ is complete with cylindrical ends. In this paper we continue this line of research and
prove an existence result in this far-CMC case for manifolds with asymptotically Euclidean ends, which is one of the standard
and most important settings in relativity.  The new issue to be faced here is the way that barriers must be constructed near infinity.
This is similar to what must be done in the cylindrical case, but the argument here is simpler than in \cite{asympt-cyl}.
We also determine the precise asymptotics of solutions. A near-CMC result is proved here using very similar methods
(we recall that \cite{const-asympt-flat} contains other near-CMC results for asymptotically Euclidean data sets.)

We now state our main results.  The precise definitions of asymptotically Euclidean metrics, and of the weighted Sobolev spaces
appearing in these statements, are all given in the next section.
\begin{theorem}{\bf (Far-from-CMC)}
\label{FarC}
Suppose that $(M^n,g)$ is a $W^{2,p}_{\gamma}$ asymptotically Euclidean (AE) metric with positive Yamabe invariant,
where $p>n$ and $\gamma \in (2-n,0)$, and set $\delta = \gamma/2$. Fix data $\tau \in W^{1,p}_{\delta-1}$, $\sigma \in
L_{\delta-1}^{\infty}$, nonnegative $\rho \in L^{\infty}_{2\delta-2}$ and $ J \in L^{p}_{\delta-2}$, and assume that
$\|\sigma\|_{L^{\infty}_{\delta-1}}$, $\|\rho\|_{L^{\infty}_{\delta-2}}$ and $\|J \|_{L^{p}_{\delta-2}}$ are sufficiently small (depending
on $\tau$, $g$ and $n$). Then there exists a solution $(\phi,W)$  to \eqref{ce2} with $W \in W^{2,p}_{\delta}$, $\phi > 0$
and $\phi- A_j \in W^{2,p}_{\delta}$ for some constant $A_j > 0$ on each end $E_j$ of $M$.
\label{ffcmc}
\end{theorem}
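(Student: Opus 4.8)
The plan is to adapt the Holst--Nagy--Tsogtgerel and Maxwell scheme \cite{closed-fcmc, vac-fcmc} to the asymptotically Euclidean setting. There are three ingredients: the weighted linear elliptic theory for the vector and conformal Laplacians; the construction of global barriers for the Lichnerowicz equation that are compatible with the asymptotics on each end; and a fixed point argument.

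For the linear theory: since $2-n<\gamma<\delta=\gamma/2<0$, the vector Laplacian $\vL\colon W^{2,p}_{\delta}\to L^p_{\delta-2}$ is an isomorphism, being Fredholm of index zero in this weight range with kernel the (trivial) space of conformal Killing fields decaying at infinity. Writing $\calL_g:=\Delta_g-c_nR_g$, the same holds for $\calL_g\colon W^{2,p}_{\delta}\to L^p_{\delta-2}$: it is a compact perturbation of $\Delta_g$ because $R_g\in L^p_{\gamma-2}\subset L^p_{\delta-2}$, hence Fredholm of index zero, and it is injective since the positive Yamabe invariant makes the quadratic form $\int(|\nabla w|^2+c_nR_gw^2)$ positive definite. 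I will also use the standard fact that a bounded $\calL_g$--harmonic function on $M$ converges to a constant on each end $E_j$, with the difference in $W^{2,p}_{\delta}$ over $E_j$. Granting the barriers $\phi_-\le\phi_+$ below, the fixed point step is routine: for positive $v$ with $\|v\|_{L^\infty}$ controlled, equation (ii) of \eqref{ce2} has a unique solution $W(v)\in W^{2,p}_{\delta}$ with $\|W(v)\|_{W^{2,p}_{\delta}}\le C\bigl(\|v\|_{L^\infty}^N\|\nabla\tau\|_{L^p_{\delta-2}}+\|J\|_{L^p_{\delta-2}}\bigr)$, and since $p>n$ we have $W^{1,p}_{\delta-1}\hookrightarrow L^\infty_{\delta-1}$, so $|\sigma+\calD W(v)|^2\in L^\infty_{2\delta-2}\subset L^p_{\delta-2}$; inserting $W(v)$ into equation (i) and solving the scalar equation $\calN(u,W(v))=0$ by the method of sub- and super-solutions defines a map $\Phi(v)=u$ with $\phi_-\le u\le\phi_+$, which is continuous and compact on $[\phi_-,\phi_+]$ in the $C^0$ topology by the elliptic estimates and the compact weighted Sobolev embeddings. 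Schauder's theorem produces a fixed point $\phi\ge\phi_->0$; the stated asymptotics then follow by rewriting (i) as $\calL_g\phi=F$ with $F\in L^p_{\delta-2}$, subtracting the $W^{2,p}_{\delta}$--solution of $\calL_g(\cdot)=F$, and applying the harmonic--function fact, with $A_j\ge\inf_{E_j}\phi_->0$.

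The heart of the matter, and the step I expect to be hardest, is the barrier construction. I would take $\phi_+$ to be a positive constant on each end $E_j$ plus a decaying correction in $W^{2,p}_{\delta}$, arranged so that $\calN(\phi_+,W)\le0$ for every $W=W(v)$ with $\phi_-\le v\le\phi_+$. In the compact part this is the HNT/Maxwell argument, where the smallness of $\|\sigma\|_{L^\infty_{\delta-1}}$, $\|\rho\|_{L^\infty_{\delta-2}}$ and $\|J\|_{L^p_{\delta-2}}$ (with threshold depending on $\tau$, $g$, $n$) is used to dominate the focusing terms $c_n|\sigma+\calD W|^2\phi_+^{-N-1}+c_n\rho\phi_+^{-N/2}$. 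Near infinity the mechanism is different but easier: $\tau$, $\sigma$, $\rho$ and $J$ all decay there, so $|\sigma+\calD W|^2$, $\tau^2$ and $\rho$ are $O(r^{2\delta-2})$ with $\delta<0$, and these terms can be absorbed by $\Delta_g$ applied to a correction of size $r^{2\delta}$. The sub-solution $\phi_-$ is built in the same spirit, as a small positive constant on each end plus a correction, again using smallness of the data in the interior and decay of all lower-order coefficients near infinity. The delicate points --- and the reason for the smallness hypotheses --- are to patch the interior and asymptotic pieces so that $\phi_-\le\phi_+$ globally and, above all, to make the super-solution inequality hold uniformly over the whole family $\{W(v):v\in[\phi_-,\phi_+]\}$.
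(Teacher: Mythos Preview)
Your overall architecture matches the paper exactly: weighted linear theory for $\vL$ and $\calL_g$, global barriers, Schauder fixed point, and the harmonic-function argument for the asymptotics $\phi-A_j\in W^{2,p}_\delta$. The differences are all in the barrier step, which you correctly flag as the heart of the matter but describe only schematically.

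For the supersolution, you propose a two-region construction (HNT/Maxwell in the compact part, a separate $r^{2\delta}$ correction near infinity) with a patching step. The paper avoids patching entirely: it fixes a single smooth positive $F$ equal to $r^{\gamma-2}=r^{2\delta-2}$ outside a compact set, solves $(\Delta-c_nR)(1+\Psi)=-F$ globally, and sets $\phi_+=\eta(1+\Psi)$. Then $\calN(\phi_+,W(\phi))\le -\eta F + r^{2\delta-2}(C_1\eta^{N-1}+C_2\eta^{-N-1}(\|\sigma\|^2+\|J\|^2)+C_3\eta^{-N/2}\|\rho\|)$, and since $F$ itself behaves like $r^{2\delta-2}$ the single term $-\eta F$ absorbs everything once $\eta$ and then the data norms are small. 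This is cleaner than patching and makes the dependence of the smallness threshold on $\tau,g,n$ transparent. For the subsolution, the paper's construction is simpler than what you sketch and uses \emph{no} smallness: take $\psi\in W^{2,p}_\delta$ so that $(1+\psi)^{N-2}g$ has scalar curvature $-\tfrac{n-1}{n}\tau^2$ (a prescribed scalar curvature problem), and set $\phi_-=\alpha(1+\psi)$ for small $\alpha\le 1$; then $\calN(\phi_-,W(\phi))=b_n\tau^2(1+\psi)^{N-1}(\alpha-\alpha^{N-1})+(\text{nonnegative})\ge 0$ for every $\phi$, not just $\phi\ge\phi_-$. Your suggestion to use smallness for $\phi_-$ would work but is unnecessary, since the focusing terms in the Lichnerowicz operator already have the favorable sign for a subsolution.
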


\begin{theorem}{(\bf Near-CMC)}
\label{NearC}
Let $(M,g)$ be AE as in Theorem \ref{ffcmc}, with $\gamma$, $\delta$ as above.
Assume too that $\tau \in W^{1,p}_{\delta-1}$ with $\tau - Br ^{2\delta-2}\|d\tau\|_{L^p_{\delta-2}}>0$ for some $B > 0$,
where $r$ is an everywhere positive function which is the radial distance on each end of $M$, and that
$ \sigma \in  L_{\delta -1}^{\infty}$,  $\rho \in L^{p}_{\delta-2}$ with $\rho \geq 0$ and $J \in L^{p}_{\delta-2}$.
Then there exists a solution $(\phi,W)$ to \eqref{ce2} with $W \in W^{2,p}_{\delta}$, $\phi > 0$ and
$\phi- A_j \in W^{2,p}_{\delta}$ for some constant $A_j > 0$ on each end $E_j$ of $M$.
\end{theorem}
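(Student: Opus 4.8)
The plan is to follow the scheme of Theorem~\ref{ffcmc}: use the momentum constraint to write $W$ as a function of the conformal factor $u$, substitute into the Lichnerowicz equation (i) of \eqref{ce2}, trap $u$ between a global sub- and a global super-solution, and extract a solution by Schauder's fixed point theorem. Since $\vL : W^{2,p}_{\delta}\to L^{p}_{\delta-2}$ is an isomorphism for $\delta\in(2-n,0)$ (recalled in the next section), for each $u\in L^{\infty}$ with $0<m\le u\le M$ there is a unique $W(u)\in W^{2,p}_{\delta}$ solving (ii) of \eqref{ce2}, with
\[
\|W(u)\|_{W^{2,p}_{\delta}}\le C\big(\tfrac{n-1}{n}\|u^{N}\,d\tau\|_{L^{p}_{\delta-2}}+\|J\|_{L^{p}_{\delta-2}}\big)\le C\big(M^{N}\|d\tau\|_{L^{p}_{\delta-2}}+\|J\|_{L^{p}_{\delta-2}}\big).
\]
Since $p>n$, the embedding $W^{1,p}_{\delta-1}\hookrightarrow C^{0}_{\delta-1}$ converts this into the pointwise bound $|\calD W(u)(x)|\le C\big(M^{N}\|d\tau\|_{L^{p}_{\delta-2}}+\|J\|_{L^{p}_{\delta-2}}\big)\,r(x)^{\delta-1}$, and $u\mapsto W(u)$ is locally Lipschitz into $W^{2,p}_{\delta}$; these are routine weighted-elliptic facts. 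The near-CMC hypothesis on $\tau$ will be needed only in the construction of the global super-solution, where Theorem~\ref{ffcmc} instead used the smallness of $\sigma$, $\rho$, $J$.

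For the barriers, first conformally normalize so that $R_{g}\ge 0$ (possible since the Yamabe invariant is positive), and fix a positive $\psi$ with $\Delta_{g}\psi-c_{n}R_{g}\psi=0$, $\psi\to 1$ on each end. A global sub-solution is $\phi_{-}=\eta\psi$ for $\eta>0$ small: the negative powers of $\phi_{-}$ in $\calN$ blow up while $b_{n}\tau^{2}\phi_{-}^{N-1}\to 0$, so $\calN(\phi_{-},W)\ge 0$ for every admissible $W$ off the set $\{\sigma+\calD W=0\}\cap\{\rho=0\}$, which is Lebesgue-null because $M$ admits no nontrivial decaying conformal Killing field (so $\calD W(u)\not\equiv 0$) and a weak sub-solution is enough. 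For the global super-solution take $\phi_{+}=A+h$ with $A>0$ constant and $h\ge 0$ in $W^{2,p}_{\delta}$ solving a linear equation whose right-hand side dominates $c_{n}|\sigma+\calD W(u)|^{2}\phi_{+}^{-N-1}+c_{n}\rho\,\phi_{+}^{-N/2}$ uniformly over admissible $u$; this source lies in $L^{p}_{\delta-2}$ because $|\sigma+\calD W(u)|^{2}=O(r^{2\delta-2})$ and $\rho\in L^{p}_{\delta-2}$, so $h\in W^{2,p}_{\delta}$ exists and is nonnegative by the maximum principle. The delicate point is that the $\calD W(u)$-contribution to this source scales like $\|\phi_{+}\|_{\infty}^{2N}\phi_{+}^{-N-1}$, i.e. like the $(N-1)$st power of $\phi_{+}$, exactly the non-CMC obstruction familiar from the closed case; it is here that the weighted near-CMC inequality $\tau>Br^{2\delta-2}\|d\tau\|_{L^{p}_{\delta-2}}$ enters, forcing the term $-b_{n}\tau^{2}\phi_{+}^{N-1}$---together with the freedom in $A$---to absorb the $\calD W(u)$-contribution, so that $\calN(\phi_{+},W)\le 0$ for all admissible $W$. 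Adjusting parameters one arranges $0<\phi_{-}\le\phi_{+}$, and both barriers approach positive constants on each end $E_{j}$, the differences lying in $W^{2,p}_{\delta}$.

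With the barriers fixed, let $\calU=\{u:\phi_{-}\le u\le\phi_{+}\}$, a closed convex subset of $C^{0}(M)$, and for $u\in\calU$ let $T(u)$ be the unique $v$ with $\calN(v,W(u))=0$ and $\phi_{-}\le v\le\phi_{+}$; this exists by the method of sub- and super-solutions, since with $\calD W(u)$ frozen the Lichnerowicz nonlinearity is monotone in $v$ on this interval (using $R_{g}\ge 0$ and $\rho\ge 0$). Uniform weighted estimates bound $T(u)-A_{j}(u)$ in $W^{2,p}_{\delta}$ on each end, with $A_{j}(u)$ ranging in a fixed compact interval, all uniformly in $u\in\calU$; the compactness of $W^{2,p}_{\delta}\hookrightarrow C^{0}_{\delta'}$ for $\delta<\delta'<0$ then makes $T:\calU\to\calU$ continuous and compact, and Schauder's theorem yields $\phi=T(\phi)$, whence $(\phi,W(\phi))$ solves \eqref{ce2} with $\phi>0$. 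Finally $\phi$ is bounded and $\Delta_{g}\phi\in L^{p}_{\delta-2}$ on each end, so $\phi$ converges there to a constant $A_{j}$, positive because $\phi\ge\phi_{-}$, and bootstrapping the equation in the weighted spaces gives $\phi-A_{j}\in W^{2,p}_{\delta}$.

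I expect the main obstacle to be the construction of the global super-solution near infinity: controlling the genuine non-CMC term $|\calD W(u)|^{2}\phi_{+}^{-N-1}$ by the mean-curvature term under the weighted near-CMC hypothesis, while keeping $\phi_{+}$ asymptotic to a constant on each end so that the stated decay $\phi-A_{j}\in W^{2,p}_{\delta}$ is not destroyed. A secondary technical point is the compactness of $T$ on the noncompact manifold $M$, for which the weighted function-space theory of the next section does the work.
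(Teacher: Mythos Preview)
Your overall architecture---define $W(u)$ via the momentum constraint, build global barriers, and close with Schauder---is exactly the paper's, and you correctly locate the near-CMC hypothesis in the supersolution step. But your subsolution construction has a genuine gap.

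With $\psi$ chosen so that $(\Delta_g - c_nR_g)\psi = 0$ and $\phi_- = \eta\psi$, one has
\[
\calN(\phi_-, W) = c_n|\sigma + \calD W|^2(\eta\psi)^{-N-1} + c_n\rho\,(\eta\psi)^{-N/2} - b_n\tau^2(\eta\psi)^{N-1}.
\]
The near-CMC hypothesis forces $\tau > 0$ everywhere, so at any point where $\sigma + \calD W = 0$ and $\rho = 0$ this expression is strictly negative, regardless of $\eta$. Your claim that this bad set is Lebesgue-null is not justified: a global subsolution must work for \emph{every} $W = W(\phi)$ with $\phi \geq \phi_-$, and nothing prevents $\sigma + \calD W(\phi)$ from vanishing on a set of positive measure (e.g.\ take $\sigma \equiv 0$, $\rho \equiv 0$, $J \equiv 0$; the absence of conformal Killing fields only gives $\calD W \not\equiv 0$, not nonvanishing a.e.). The paper avoids this by choosing $\psi \in W^{2,p}_\delta$ so that $(1+\psi)^{N-2}g$ has scalar curvature $-\tfrac{n-1}{n}\tau^2$; then with $\phi_- = \alpha(1+\psi)$ the linear term and the $\tau^2$-term combine into $b_n\tau^2(1+\psi)^{N-1}(\alpha - \alpha^{N-1}) \geq 0$ for $0 < \alpha \leq 1$, and the remaining terms are manifestly nonnegative with no appeal to $|\sigma + \calD W|$ being positive.

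Your supersolution sketch is also rather different from the paper's and, as written, circular: you want $h$ to solve a linear equation whose right side dominates $c_n|\sigma + \calD W|^2\phi_+^{-N-1} + c_n\rho\,\phi_+^{-N/2}$, but $\phi_+ = A + h$ already contains $h$. The paper instead builds $\phi_+ = \eta(1+u)(1+v)$ from two auxiliary functions: $u$ solving a prescribed-scalar-curvature equation tied to $\tau^2$, and $v$ solving a linear equation with source $-c_n(\rho + |\sigma|^2)$; then $\eta$ is taken \emph{large}, and the near-CMC inequality is invoked exactly to make the term $-\tfrac{b_n}{3}\tau^2 + C\|d\tau\|^2 r^{2\delta-2}$ negative. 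Your heuristic that $-b_n\tau^2\phi_+^{N-1}$ absorbs the $|\calD W|^2$-contribution is the right intuition, but the ansatz $A + h$ with $A$ adjustable does not by itself close the estimate.
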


The primary task in proving these theorems is to establish the existence of upper and lower barriers for equations \eqref{ce2}.
After discussing asymptotically Euclidean manifolds and function spaces in Section \ref{AsEuc}, and then reviewing the
mapping properties of the scalar and vector Laplacian operators on AE manifolds in Section \ref{Map}, we derive these barriers
in Section \ref{sec:barriers}. A standard fixed point theorem is then used  in Section \ref{sec:FixedPoint}
to prove Theorems \ref{FarC} and \ref{NearC}.

\section{Asymptotically Euclidean Manifolds}
\label{AsEuc}
Let $(M^n,g)$ be an asymptotically Euclidean (AE) manifold. This means that $M$ is a complete
manifold such that for some compact set $K \subset M$, the complement $M \setminus K$
has finitely many components, $E_1, \ldots, E_\ell$, where each $E_j$ is diffeomorphic
to the exterior of a ball in a Euclidean space, $E_j \cong \RR^n \setminus B_R(0)$, and on each
of these ends, the metric $g$ is asymptotic to the Euclidean metric. More precisely, recall
that a function $u \in W^{k,p}_\delta(\RR^n)$ if
\[
\sum_{|\beta| \leq k} ||r^{-\delta - \frac{n}{p} + |\beta|} \dell^\beta u||_{L^p}  < \infty.
\]
Here $r$ is a smooth positive function on $M$ which agrees with the radial function $|x|$ on
each end. To extend these spaces and norms to tensors, as needed in the characterization of the decay of the metric
$g$ above, we require this regularity and decay for each component with respect to a constant frame in the
background Euclidean metric. Thus we say that $g$ is AE of class $W^{k,p}_{\gamma}$, for some $\gamma < 0$, if
in a fixed Euclidean coordinate system for that end,
\[
\left. g \right|_{E_j}  - g_{\mathrm{Euc}} \in W^{k,p}_\gamma.
\]
The regularity of the tensor field $\hat K$ and the scalar and vector fields $\hat \rho$ and $\hat J$ are defined analogously.
We refer the reader to \cite{mass-asymp} for a survey of the well-known properties of these spaces.

We single out one fact which we used repeatedly:  if $p > n$, and $w \in W^{1,p}_\delta$ for any $\delta \in \RR$, then
\begin{equation}
|w| \leq r^{\delta} ||w||_{W^{1,p}_\delta}.
\label{weightest}
\end{equation}

The initial data set $(M,\hat g,\hat K, \Lambda, \hat \rho, \hat J)$ is said to be asymptotically  Euclidean if
$\hat g - g_{\mathrm{Euc}} \in W^{k,p}_{\gamma}$, $\hat K \in W^{k-1,p}_{\gamma-1}$, and $\rho, J \in W^{k-2,p}_{\gamma-2}$
for some $\gamma < 0$.

In the following, we always assume that $(M,g)$ is AE of class $W^{2,p}_\gamma$ with $p > n$ and $2(2-n) < \gamma < 0$,
but omit writing this explicitly. We also always assume that $\delta = \gamma/2$, so $2-n < \delta < 0$.
All results below have obvious modifications if we assume that $g$ is AE of class $W^{k,p}_\gamma$ with $k >1+ n/p$.

\section{Mapping properties of the scalar and vector Laplacians}
\label{Map}
The mapping properties of elliptic operators on asymptotically Euclidean spaces is now classical, going back
at least to \cite{lap-weighted}, but see also \cite{app-hor} and the appendix in \cite{CM1}. We record a few such results
needed below, pertaining to the solvability of the inhomogeneous linear equation
\[
P u = f,
\]
where $P$ is either the conformal Laplacian $\Delta_g - c_n R$ or else the vector Laplacian $\vL$.
\begin{proposition}\label{veclapfred}
If $(M,g)$ is AE, then
\begin{equation}
P: W^{2,p}_{\delta} \longrightarrow L^p_{\delta-2}
\label{fredvL}
\end{equation}
is Fredholm of index zero, and there is an a priori estimate: there is a constant $C > 0$ such that
\[
\|\psi \|_{\psi^{2,p}_{\delta}} \le C\left(\|P \psi\|_{L^p_{\delta-2}}+\|\psi\|_{L^\infty}\right)
\]
for all $\psi \in W^{2,p}_\delta$.
The map \eqref{fredvL} is an isomorphism if and only if $P$ has no nullspace in $W^{2,p}_\delta$. For $P = \Delta - c_n R$,
this is the case provided the Yamabe invariant $\mathcal Y([g]) $ is positive; while for $P = \vL$, this holds if
$(M,g)$ admits no global conformal Killing fields.  Under this isomorphism condition, the a priori estimate above
can be strengthened to
\begin{align}\label{injbound}
\|\psi \|_{W^{2,p}_{\delta}} \le C\| P \psi\|_{L^p_{\delta-2}}.
\end{align}
\end{proposition}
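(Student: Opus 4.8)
The plan is to assemble the statement from three classical ingredients about elliptic operators on asymptotically Euclidean manifolds, quoting the literature (\cite{lap-weighted,app-hor,CM1}) for the general weighted-space theory and then specializing to our two operators $P = \Delta_g - c_n R$ and $P = \vL$. First I would establish the Fredholm property of \eqref{fredvL}. The key point is that $\gamma \in (2-n,0)$, hence $\delta = \gamma/2 \in (2-n,0)$, which lies strictly between the indicial roots $0$ and $2-n$ of the Laplacian on Euclidean space; for such non-exceptional weights the standard parametrix construction (gluing the interior parametrix for the elliptic operator $P$ on the compact core to the model parametrix on each Euclidean end) produces a left and right inverse modulo compact operators, so $P$ is Fredholm on $W^{2,p}_\delta \to L^p_{\delta-2}$. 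The index-zero statement follows either from the self-adjointness-type duality pairing between the weights $\delta$ and $2-n-\delta$ (which is again admissible since it too lies in $(2-n,0)$) together with the symmetry of $P$, or simply by deforming $\delta$ continuously within $(2-n,0)$ and noting the index is constant there and equals $0$ at, say, the $L^2$-type weight.

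Next I would derive the a priori estimate
\[
\|\psi\|_{W^{2,p}_\delta} \le C\left(\|P\psi\|_{L^p_{\delta-2}} + \|\psi\|_{L^\infty}\right).
\]
This is the usual "elliptic estimate plus lower-order bulk term" argument: from the weighted Schauder/Calderón–Zygmund estimates on the ends one controls $\|\psi\|_{W^{2,p}_\delta}$ on $M \setminus K$ by $\|P\psi\|_{L^p_{\delta-2}}$ plus a weighted $L^p$ norm of $\psi$ over a slightly larger region, and by interior elliptic regularity on the compact core one controls $\psi$ there by $\|P\psi\|_{L^p}$ plus $\|\psi\|_{L^p(K')}$. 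Patching these with a partition of unity and absorbing, the remaining lower-order term is a weighted $L^p$ norm of $\psi$ on a fixed compact set, which by \eqref{weightest}-type reasoning (or simply Hölder on a compact set) is bounded by $C\|\psi\|_{L^\infty}$. One could keep a weaker $\|\psi\|_{L^p_{\delta'}}$ term here, but $\|\psi\|_{L^\infty}$ is the form needed later for the barrier/fixed-point argument, so I would state it that way.

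For the isomorphism criterion: since the index is zero, $P$ is an isomorphism iff it is injective on $W^{2,p}_\delta$, i.e. iff it has trivial nullspace there; this is immediate from the Fredholm alternative. It then remains to identify when the nullspace vanishes. For $P = \vL = -\mathrm{div}\circ\calD$, an element $\psi \in W^{2,p}_\delta$ of the kernel satisfies $\int \langle \calD\psi, \calD\psi\rangle = \int \langle \vL\psi, \psi\rangle = 0$ after integrating by parts — and the boundary terms at infinity vanish because $\delta < 0$ forces $\psi \to 0$ and the decay of $\calD\psi$ is integrable against it — so $\calD\psi = 0$, i.e. $\psi$ is a conformal Killing field; thus the kernel is trivial precisely when $(M,g)$ admits no global conformal Killing fields decaying at the rate $\delta$, and on an AE manifold any conformal Killing field lies in these spaces, giving the stated criterion. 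For $P = \Delta_g - c_n R_g$, positivity of the Yamabe invariant $\mathcal Y([g])$ implies (by the standard variational characterization, together with density of compactly supported functions in the relevant weighted space for $\delta$ in this range) that the quadratic form $\int |\nabla\psi|^2 + c_n R_g \psi^2$ is positive definite, so the same integration-by-parts argument forces $\psi = 0$. Finally, under the isomorphism hypothesis the a priori estimate upgrades to \eqref{injbound}: if not, there is a sequence $\psi_k$ with $\|\psi_k\|_{W^{2,p}_\delta}=1$ and $\|P\psi_k\|_{L^p_{\delta-2}}\to 0$; by the compactness of the inclusion $W^{2,p}_\delta \hookrightarrow L^\infty$ (which holds for these weights since $\delta<0$, so the lower-order term in the unrefined estimate is the norm of a compact operator) a subsequence converges in $L^\infty$, hence by the unrefined estimate converges in $W^{2,p}_\delta$ to some $\psi$ with $\|\psi\|_{W^{2,p}_\delta}=1$ and $P\psi = 0$, contradicting injectivity.

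I expect the main obstacle, or at least the point requiring the most care, to be the justification of the integration-by-parts / vanishing-of-boundary-terms step that converts "weighted-$L^p$ kernel element" into "finite-energy solution" for both operators; once that is in hand, the Yamabe and conformal-Killing criteria are standard, and the Fredholm theory and a priori estimates are quotable from \cite{lap-weighted} and \cite{CM1}.
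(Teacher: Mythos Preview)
The paper does not actually prove this proposition: it is stated without proof, with the paragraph preceding it pointing to \cite{lap-weighted}, \cite{app-hor}, and the appendix of \cite{CM1} as classical references for the mapping properties of elliptic operators on asymptotically Euclidean spaces. Your proposal is a correct and standard assembly of the result from that same literature, and in fact supplies considerably more detail than the paper itself.
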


We record two useful corollaries.
\begin{proposition}\label{samas}
If $P$ is the conformal Laplacian $\Delta - c_nR$ and $R \geq 0$, and if $f = r^{\gamma-2} + \hat{f}$, where
$\hat{f} \in L^p_{\gamma'-2}$ for $\gamma' < \gamma$, then there is a unique solution $w$ to
$Pw = f$ with $w = c_\gamma\, r^\gamma + \hat{w}$, $c_\gamma = (\gamma^2 + (n-2)\gamma)^{-1}$, and
$\hat{w} \in W^{2,p}_{\gamma''}$ where $\gamma'' = \max\{\gamma', 2\gamma\}$ if this number is greater than $2-n$
(or else $\gamma'' \in (2-n, \gamma)$).
\end{proposition}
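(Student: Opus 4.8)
The plan is to solve for the exact leading-order profile $c_\gamma r^\gamma$ explicitly and then treat the remainder with the isomorphism theory of Proposition \ref{veclapfred}. First I would verify that $r^\gamma$ is an approximate solution: on each end the metric is Euclidean to leading order, so $\Delta_g(r^\gamma) = \Delta_{\mathrm{Euc}}(r^\gamma) + (\text{error})$, and since $\Delta_{\mathrm{Euc}}(|x|^\gamma) = \gamma(\gamma + n - 2)\,|x|^{\gamma - 2}$, choosing $c_\gamma = (\gamma^2 + (n-2)\gamma)^{-1}$ makes $c_\gamma\Delta_{\mathrm{Euc}}(r^\gamma) = r^{\gamma - 2}$ exactly. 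Note that $\gamma^2 + (n-2)\gamma \neq 0$ precisely because $\gamma \in (2-n, 0)$ avoids the indicial roots $0$ and $2-n$, so $c_\gamma$ is well-defined. One must also cut off $r^\gamma$ to be smooth and positive on the compact core, introducing a compactly supported error; together with the $\Delta_g - \Delta_{\mathrm{Euc}}$ discrepancy (which lies in $W^{1,p}_{\gamma - 2 + \gamma} \subset L^p_{2\gamma - 2}$ by the mapping properties of multiplication, since $g - g_{\mathrm{Euc}} \in W^{2,p}_\gamma$) and the curvature term $-c_n R (c_\gamma r^\gamma)$ (which lies in $L^p_{\gamma - 2 + \gamma}$ since $R \in L^p_{\gamma - 2}$), all of these errors are collected into a function in $L^p_{\gamma'' - 2}$ with $\gamma'' = \max\{\gamma', 2\gamma\}$, possibly shrunk to land in $(2-n, \gamma)$.

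Next I would set $\tilde w := c_\gamma \chi r^\gamma$ (with $\chi$ the cutoff) and define $\hat w$ to be the solution of $P\hat w = f - P\tilde w$. By the preceding paragraph the right-hand side lies in $L^p_{\gamma'' - 2}$ with $\gamma'' \in (2-n, \gamma)$, and since $R \geq 0$ forces $\mathcal Y([g]) > 0$ (the conformal Laplacian has no nullspace — one can test against the eigenfunction or use the maximum principle on an AE manifold with nonnegative, not identically zero potential plus decay), Proposition \ref{veclapfred} gives that $P : W^{2,p}_{\gamma''} \to L^p_{\gamma'' - 2}$ is an isomorphism, producing a unique $\hat w \in W^{2,p}_{\gamma''}$ with the bound \eqref{injbound}. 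Then $w := \tilde w + \hat w = c_\gamma r^\gamma + (\text{compactly supported correction} + \hat w)$, and absorbing the compactly supported piece into $\hat w$ (it lies in every weighted space) gives the claimed decomposition.

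For uniqueness: if $w_1, w_2$ are two solutions of the stated form, their difference is $c_\gamma r^\gamma$-free, hence lies in $W^{2,p}_{\gamma''} \subset W^{2,p}_\delta$-type spaces where $P$ is injective, so $w_1 = w_2$. (One should check $\gamma''$ is admissible for the isomorphism, i.e. avoids indicial roots and lies in the range where Proposition \ref{veclapfred} applies; shrinking $\gamma''$ toward $\gamma$ if necessary handles the case $2\gamma \le 2-n$.)

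The main obstacle I anticipate is bookkeeping the weight of the error term $f - P\tilde w$ and confirming it genuinely decays faster than $r^{\gamma - 2}$: each contribution — the metric perturbation acting on $r^\gamma$, the cutoff commutator, and the curvature term — must be shown to lie in $L^p_{\gamma'' - 2}$ for the asserted $\gamma'' = \max\{\gamma', 2\gamma\}$, which requires the multiplication/composition estimates for weighted Sobolev spaces ($W^{2,p}_\gamma \cdot W^{1,p}_{\gamma'} \hookrightarrow$ a suitable space when $p > n$) applied carefully. The borderline case where $2\gamma \le 2 - n$, so that one cannot improve past the indicial root, is handled by the parenthetical escape clause: one simply picks any admissible $\gamma'' \in (2-n, \gamma)$, which suffices since $f - P\tilde w$ decays strictly faster than $r^{\gamma-2}$ regardless.
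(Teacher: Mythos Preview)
Your proposal is correct and follows essentially the same strategy as the paper's proof: subtract off the explicit leading profile $c_\gamma r^\gamma$, show that the remainder $f - P(c_\gamma r^\gamma)$ lies in $L^p_{\gamma''-2}$ (the contributions being $\hat f$, a compactly supported piece, and the $L^p_{2\gamma-2}$ discrepancy coming from $g - g_{\mathrm{Euc}}$ and $R$), and then invoke Proposition~\ref{veclapfred} together with $R \geq 0$ to rule out kernel and solve for $\hat w$. The only cosmetic difference is that the paper introduces an auxiliary metric $\bar g$ which is exactly Euclidean on each end (rather than a cutoff of $r^\gamma$) to organize the error terms --- since $r$ is already a globally smooth positive function, no cutoff is actually needed.
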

\begin{proof}
Write $w = c_\gamma\, r^\gamma + \hat{w}$ and let $\bar{g}$ be a $W^{2,p}$ metric which agrees with $g$ away from the ends but
is exactly Euclidean on each $E_j$. Then we must solve
\[
(\Delta - c_n R)\hat{w} = \hat{f} - c_\gamma (\Delta_{\bar{g}} - R_{\bar{g}}) r^{\gamma} - c_\gamma
( (\Delta - c_n R) - (\Delta_{\bar{g}} - c_n R_{\bar{g}})) r^\gamma.
\]
The second term on the right is $L^p$ with compact support, while the third term lies in $L^p_{2\gamma - 2}$.
Using the nonnegativity of $R$ (to rule out the kernel), the result follows from Proposition \ref{veclapfred}.
\end{proof}
\begin{proposition}\label{boundDW}
If $(M,g)$ is AE and has no conformal Killing fields, and if $f \in L^p_{\delta-2}$, then the unique solution $W\in W^{2,p}_\delta$
to $\vL W = f$ satisfies
\begin{equation}
\label{DWest}
\|\calD W\|_{\infty} \le C r^{\delta-1} ||f||_{L^p_{\delta-2}}.
\end{equation}
\end{proposition}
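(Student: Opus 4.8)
The plan is to combine the isomorphism bound \eqref{injbound} for $\vL$ with the pointwise weighted-Sobolev estimate \eqref{weightest}. The issue is that \eqref{weightest} applies to a function in $W^{1,p}_\delta$ and controls it pointwise by $r^\delta$ times its norm, whereas we want to bound $\calD W$ — which a priori lives in $W^{1,p}_{\delta-1}$ since it costs one derivative — by $r^{\delta-1}$ times $\|f\|_{L^p_{\delta-2}}$. So the natural route is: first apply Proposition \ref{veclapfred} with $P = \vL$ to get the unique $W \in W^{2,p}_\delta$ solving $\vL W = f$, together with the bound $\|W\|_{W^{2,p}_\delta} \le C\|f\|_{L^p_{\delta-2}}$. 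Next, observe that $\calD$ is a first-order constant-coefficient-type operator (modulo the metric), so $\calD W \in W^{1,p}_{\delta-1}$ with $\|\calD W\|_{W^{1,p}_{\delta-1}} \le C\|W\|_{W^{2,p}_\delta} \le C\|f\|_{L^p_{\delta-2}}$. Finally, apply \eqref{weightest} — which requires $p>n$, as we are assuming — to the components of $\calD W$ with weight $\delta-1$ in place of $\delta$, yielding the pointwise bound $|\calD W| \le r^{\delta-1}\|\calD W\|_{W^{1,p}_{\delta-1}} \le C r^{\delta-1}\|f\|_{L^p_{\delta-2}}$, which is exactly \eqref{DWest}.

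In carrying this out I would be a little careful about two bookkeeping points. First, \eqref{weightest} as stated is for scalar functions on $\RR^n$; to apply it to the tensor $\calD W$ one uses the convention, already set up in Section \ref{AsEuc}, that tensor norms are defined componentwise in a fixed Euclidean frame on each end, together with the fact that on the (compact) complementary region everything is bounded by the $W^{2,p}$ norm by ordinary Sobolev embedding. Second, I would check that $\calD$ indeed maps $W^{2,p}_\delta \to W^{1,p}_{\delta-1}$ boundedly: the leading terms $\nabla_i W_j + \nabla_j W_i - \tfrac{2}{n}(\mathrm{div}_g W) g_{ij}$ lose exactly one derivative and one unit of decay, and the Christoffel-symbol corrections coming from $g - g_{\mathrm{Euc}} \in W^{2,p}_\gamma$ decay faster (they contribute weights like $\gamma - 1 < \delta - 1$ after using \eqref{weightest} on the metric perturbation), so they are harmlessly absorbed.

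The main (and only real) obstacle is thus not any deep estimate but simply making sure that the weight arithmetic is right: one must verify that the single derivative in $\calD$ shifts the weight from $\delta$ down to $\delta-1$ and not something worse, and that the resulting space $W^{1,p}_{\delta-1}$ still satisfies the hypothesis $p>n$ needed for the Morrey-type inequality \eqref{weightest}. Everything else — existence and uniqueness of $W$, the a priori bound — is quoted directly from Proposition \ref{veclapfred}, whose applicability here is guaranteed by the hypothesis that $(M,g)$ has no conformal Killing fields (so $\vL$ has trivial nullspace and \eqref{fredvL} is an isomorphism).
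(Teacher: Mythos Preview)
Your proposal is correct and follows essentially the same argument as the paper: chain the isomorphism estimate \eqref{injbound} for $\vL$, the bounded action of the first-order operator $\calD: W^{2,p}_\delta \to W^{1,p}_{\delta-1}$, and the weighted Morrey embedding \eqref{weightest} at weight $\delta-1$. The paper compresses this into a single displayed chain of inequalities, while you spell out the bookkeeping about tensor components and Christoffel corrections, but the logic is identical.
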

\begin{proof}
Combining \eqref{injbound} and \eqref{weightest}, we get
\[
r^{1-\delta}|\calD W| \leq \|\calD W\|_{L^\infty_{\delta-1}} \le C\|\calD W\|_{W^{1,p}_{\delta-1}}\le C \|W\|_{W^{2,p}_{\delta}}
\leq C ||f||_{L^p_{\delta-2}},
\]
and this gives \eqref{DWest}.
\end{proof}

\section{Barriers}
\label{sec:barriers}
We begin by recalling the notion of \emph{global sub- and supersolutions}. The function $\phi_+$ is called a global
supersolution for \eqref{ce2} if $\calN(\phi_+, W(\phi)) \leq 0$ whenever $0 < \phi \leq \phi_+$.
Similarly, $\phi_-$ is called a global subsolution if $\calN(\phi_-, W(\phi)) \geq 0$ whenever $\phi \in L^p$
and $\phi_- \leq \phi$.

\begin{theorem}{\bf (Far-from-CMC Global Supersolution )} \label{farcmcbarrier}
Let $(M,g)$ be AE with positive Yamabe invariant;  i.e., $\mathcal Y([g]) > 0$. If $||\sigma||_{L^\infty_{\delta-1}}$, $||J||_{L^p_{\delta-2}}$
and $||\rho||_{L^\infty_{2\delta-2}}$ are sufficiently small, then there exists a global supersolution $\phi_+>0$ with
$\phi_+ - \eta \in W^{2,p}_{\gamma}$ for some constant $\eta > 0$.
\end{theorem}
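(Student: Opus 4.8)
The plan is to construct $\phi_+$ in the form $\phi_+ = 1 + \eta + w$, or more precisely as a constant plus a decaying correction, and to verify the supersolution inequality $\calN(\phi_+, W(\phi)) \le 0$ for all $0 < \phi \le \phi_+$ by balancing the terms in equation (i) of \eqref{ce2}. The key observation is that since $\phi_+$ is bounded above, the matter and extrinsic curvature terms can be bounded: $c_n|\sigma + \calD W(\phi)|^2 \phi_+^{-N-1}$ and $c_n\rho\,\phi_+^{-N/2}$ are controlled using the smallness of $\|\sigma\|_{L^\infty_{\delta-1}}$, $\|\rho\|_{L^\infty_{2\delta-2}}$, and Proposition \ref{boundDW}, which bounds $\|\calD W(\phi)\|_\infty$ in terms of $\|\tfrac{n-1}{n}\phi^N\nabla\tau + J\|_{L^p_{\delta-2}}$; since $0 < \phi \le \phi_+$ and $\phi_+$ is bounded, this right-hand side is bounded in $L^p_{\delta-2}$ once $\|J\|_{L^p_{\delta-2}}$ is small and we use $\tau \in W^{1,p}_{\delta-1}$ (so $\nabla\tau \in L^p_{\delta-2}$). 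Thus $|\sigma + \calD W(\phi)|^2 \le C r^{2\delta-2}$ with $C$ small.

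Next I would handle the negative term $-b_n\tau^2\phi_+^{N-1}$, which is helpful (it has the good sign for a supersolution), and the curvature term $-c_nR_g\phi_+$. Here is where the Yamabe hypothesis enters: since $\mathcal{Y}([g]) > 0$, one may conformally change $g$ so that $R_g \ge 0$ (indeed one can arrange $R_g$ to be as small as desired in appropriate norm, or simply nonnegative), and after this reduction $-c_nR_g\phi_+ \le 0$. With $R_g \ge 0$, the required inequality becomes
\[
\Delta_g \phi_+ + c_n|\sigma + \calD W(\phi)|^2\phi_+^{-N-1} + c_n\rho\,\phi_+^{-N/2} \le c_nR_g\phi_+ + b_n\tau^2\phi_+^{N-1},
\]
and it suffices to find $\phi_+$ with $\Delta_g\phi_+ \le -C' r^{2\delta-2}$ for a suitable small constant $C'$ absorbing the two matter terms, since the right side is nonnegative. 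I would build such a $\phi_+$ using Proposition \ref{samas}: noting $2\delta - 2 = \gamma - 2$, solve $(\Delta_g - c_nR_g)w = -C'r^{\gamma-2}$ (after the conformal reduction making $R_g \ge 0$), obtaining $w = -c_\gamma C' r^\gamma + \hat w$ with $c_\gamma > 0$ (since $\gamma \in (2-n,0)$ gives $\gamma^2 + (n-2)\gamma < 0$, actually $c_\gamma < 0$ — so in fact the sign works out and $w$ is negative and decaying, and one takes $\phi_+ = \eta + w$ with $\eta$ a large enough positive constant so that $\phi_+ > 0$, $\phi_+ \to \eta > 0$ on each end, and $\phi_+ - \eta = w \in W^{2,p}_\gamma$).

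The main obstacle — and the genuinely new point of the paper as the introduction advertises — is the behavior near infinity: one must choose the correction term at exactly the right weight so that $\phi_+$ stays bounded (tends to a positive constant) rather than growing, while still dominating the matter terms which decay like $r^{2\delta-2} = r^{\gamma-2}$. This is precisely why $\delta = \gamma/2$ is chosen: the slowest decay we can afford in the matter terms is $\gamma - 2$, so that the solution $w$ of $\Delta w = O(r^{\gamma-2})$ decays like $r^\gamma \to 0$, keeping $\phi_+$ asymptotic to a constant. I would need to check carefully that the constant $C'$ can indeed be taken small enough relative to the fixed data $\tau, g, n$ — this is where all the smallness hypotheses get used simultaneously — and that after the conformal reduction to $R_g \ge 0$ all the weighted norms and the hypotheses on $\sigma, \rho, J, \tau$ are preserved (up to harmless constants). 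A final routine check is that $0 < \phi \le \phi_+$ combined with the exponents $-N-1 < 0$, $-N/2 < 0$, $N - 1 > 0$ makes every estimate above monotone in the right direction, so the inequality holds uniformly over the admissible $\phi$.
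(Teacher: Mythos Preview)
Your proposal has a genuine gap at the step where you claim ``$|\sigma + \calD W(\phi)|^2 \le C r^{2\delta-2}$ with $C$ small.'' That constant is \emph{not} small. From Proposition~\ref{boundDW} applied to $\vL W = \tfrac{n-1}{n}\phi^N\nabla\tau + J$ you get
\[
|\calD W(\phi)|^2 \le C_0\, r^{2\delta-2}\bigl(\|d\tau\|_{L^p_{\delta-2}}^2\,\|\phi\|_\infty^{2N} + \|J\|_{L^p_{\delta-2}}^2\bigr),
\]
and in the far-CMC setting $\|d\tau\|$ is \emph{not} assumed small. So the constant in your bound contains the term $\|d\tau\|^2\,\|\phi_+\|_\infty^{2N}$, which you cannot make small with your ansatz $\phi_+ = \eta + w$: you take $\eta$ large (for positivity), so $\|\phi_+\|_\infty \ge \eta$ is large, and the troublesome contribution to $\calN(\phi_+, W(\phi))$ scales like $\|d\tau\|^2\,\eta^{2N}\,\phi_+^{-N-1} \sim \|d\tau\|^2\,\eta^{N-1}$, while your absorbing term $-C' r^{\gamma-2}$ is fixed. (Incidentally, your sign analysis of $w$ is also off: with $R_g \ge 0$, the maximum principle gives $w > 0$, not $w < 0$, so $\phi_+ = \eta + w > 0$ for any $\eta > 0$; but even taking $\eta$ small does not save you, because then $\|\phi_+\|_\infty \ge \|w\|_\infty$ stays fixed while $\phi_+^{-N-1} \le \eta^{-N-1}$ blows up.)

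The paper handles exactly this point by a different scaling: it sets $\phi_+ = \eta(1+\Psi)$ where $(\Delta - c_n R)(1+\Psi) = -F$ with $F$ a fixed positive function equal to $r^{\gamma-2}$ near infinity, and then takes $\eta$ \emph{small}. The key is that now both $\sup\phi_+$ and $\inf\phi_+$ scale like $\eta$, so the $\tau$-coupling term contributes $\sim \eta^{2N}\cdot\eta^{-N-1} = \eta^{N-1}$, while the absorbing term $(\Delta - c_nR)\phi_+ = -\eta F$ scales like $\eta$. Since $N-1 > 1$, one first chooses $\eta$ small enough that $-\tfrac12\eta F + C_1\eta^{N-1}r^{2\delta-2} < 0$, and only then uses the smallness of $\|\sigma\|$, $\|J\|$, $\|\rho\|$ to kill the remaining terms. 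Your additive ansatz $\eta + w$ decouples the scale of the correction from the scale of the constant, which is precisely why the $\|d\tau\|$ term cannot be absorbed; the multiplicative ansatz $\eta(1+\Psi)$ is what makes the argument close.
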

\begin{proof}
Choose a smooth, positive function $F$ which equals $r^{\gamma -2}$ outside a compact set (recall that $\gamma$ indexes
the asymptotic behavior of the AE metric). By Proposition \ref{veclapfred}, there exists a (unique) $\Psi = c_\gamma r^\gamma +
\hat{\Psi}$, with $\hat{\Psi} \in W^{2,p}_{2\gamma}$ such that
\begin{equation}
\label{Psi}
(\Delta - c_nR )\Psi = - F + c_nR,
\end{equation}
or equivalently
\begin{equation}
(\Delta - c_nR) (1+\Psi) = -F.
\end{equation}
Note that, by the maximum principle, $1+\Psi > 0$.

Now set $\phi_+ = \eta(\Psi+1)$, where the constant $\eta>0$ is to be chosen below. We claim that, for appropriate $\eta$,
$\phi_+$ is a global supersolution. To verify this, we first note that from \eqref{DWest}, with $f =\frac{n-1}{n} \phi^N \nabla
\tau + J$, we have
\begin{equation}
\label{DWestim}
\|\calD W\|_{\infty} \le C r^{\delta-1}\left(\|d \tau\|_{L^p_{\delta-2}}\|\phi\|^{N}_{\infty}+\|J\|_{L^p_{\delta-2}} \right),
\end{equation}
and hence
\[
|\sigma + \calD W|^2 \le C r^{2\delta -2}(\|d \tau\|^2_{L^p_{\delta-2}} \|\phi\|^{2N}_{\infty} +
||\sigma||_{L^\infty_{\delta-1}}^2 + \|J\|^2_{L^p_{\delta-2}}).
\]

Since $\Psi$ decays at the precise rate $r^\gamma$ (and is strictly positive), then deleting subscripts denoting the norms
for simplicity, we calculate
\begin{multline*}
\calN(\phi_+, W(\phi)) \leq  \\
- \eta \, F + r^{2\delta-2}\left( C_1 \eta^{N-1} + C_2 \eta^{-N-1} ( ||\sigma||^2 + ||J||^2) +
C_3 \eta^{-\frac{N}{2}} ||\rho||\right).
\end{multline*}
The constants $C_1$, $C_2$ and $C_3$ depend only on $F$ and the dimension $n$.
Since $2\delta-2 = \gamma -2 < 0$ and $N-1>1$, we first choose $\eta$ sufficiently small so that
\[
-\frac12  \eta \, F+ C_1 \eta^{N-1} r^{2\delta-2} < 0,
\]
and then choose $||\sigma||$, $||J||$ and $||\rho||$ sufficiently small (depending on $C_1$, $F$, $n$ and $\eta$), so that
\[
-\frac12 \eta \, F + r^{2\delta-2} \left( C_2 \eta^{-N-1} ( ||\sigma||^2 + || J ||^2) + C_3 \eta^{-\frac{N}{2}} ||\rho||\right) < 0
\]
as well.  This proves that $\phi_+$ is a global supersolution.
\end{proof}

\begin{theorem}{\bf (Near-CMC Global Super-Solution )}\label{nearcmcgss}
Let $(M,g)$ be AE with $\calY([g]) > 0$, and fix any $\rho \in L^{p}_{2\delta-2}$ with $\rho \geq 0$, $J\in  L^{p}_{\delta-2}$ and
$\sigma \in L^{\infty}_{\delta-1}$. Suppose that $\tau \in W^{1,p}_{\delta-1}$ satisfies $\tau - Br^{2\delta-2} \|d \tau\|_{L^p_{\delta-2}} > 0$
for some constant $B$ depending only on the dimension $n$ and the constant appearing in \eqref{DWest}.
Then there exists a global supersolution for \eqref{ce2}.
\end{theorem}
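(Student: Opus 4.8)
The plan is to mimic the structure of the far-from-CMC argument (Theorem \ref{farcmcbarrier}) but to exploit the sign structure of the near-CMC hypothesis rather than smallness of $\sigma,\rho,J$. As before, I would first solve a linear problem to produce the ``skeleton'' of the barrier: choose a smooth positive $F$ agreeing with $r^{\gamma-2}$ outside a compact set, and use Proposition \ref{veclapfred} (via Proposition \ref{samas}, recalling $R_g\ge 0$ since $\calY([g])>0$, after a conformal change of $g$ within its class) to obtain $\Psi = c_\gamma r^\gamma + \hat\Psi$ with $\hat\Psi\in W^{2,p}_{2\gamma}$ solving $(\Delta - c_nR)(1+\Psi) = -F$, with $1+\Psi > 0$ by the maximum principle. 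The candidate supersolution is again $\phi_+ = \eta(1+\Psi)$, but now I do \emph{not} get to shrink the data; instead $\eta$ must be chosen large.

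The key computation is to estimate $\calN(\phi_+, W(\phi))$ for $0 < \phi \le \phi_+$. Plugging in, the Laplacian and scalar-curvature terms produce $-\eta F$ as before. The new point is the interplay between the $|\sigma+\calD W|^2\phi_+^{-N-1}$ term and the $-b_n\tau^2\phi_+^{N-1}$ term. From \eqref{DWestim} with $\phi$ replaced by the bound $\phi\le\phi_+ = \eta(1+\Psi)\le C\eta$ near infinity, $\|\calD W\|_\infty \le C r^{\delta-1}(\|d\tau\|_{L^p_{\delta-2}}\|\phi_+\|_\infty^N + \|J\|_{L^p_{\delta-2}})$. Using $\phi_+^{-N-1}\le (\eta(1+\Psi))^{-N-1}$ and the decay $1+\Psi\to 1$, the $|\sigma+\calD W|^2$ term contributes something bounded by $C r^{2\delta-2}\big(\eta^{N-1}\|d\tau\|^2 + \eta^{-N-1}(\|\sigma\|^2 + \|J\|^2)\big)$ — but the first piece, which is the dangerous $\eta^{N-1}$ term that forced smallness before, now carries a factor $\|d\tau\|^2$ and, crucially, should be absorbed against the \emph{negative} term $-b_n\tau^2\phi_+^{N-1}\approx -b_n\tau^2\eta^{N-1}$. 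The near-CMC hypothesis $\tau - Br^{2\delta-2}\|d\tau\|_{L^p_{\delta-2}} > 0$ is precisely what makes $b_n\tau^2$ dominate $C r^{2\delta-2}\|d\tau\|^2$ pointwise (choosing $B$ in terms of $n$ and the constant $C$ from \eqref{DWest}), so that the $\eta^{N-1}$ contributions cancel with room to spare for all $\eta$. What remains is $-\tfrac12\eta F$ against $C r^{2\delta-2}\big(\eta^{-N-1}(\|\sigma\|^2+\|J\|^2) + \eta^{-N/2}\|\rho\|\big)$; since $F\sim r^{\gamma-2} = r^{2\delta-2}$ outside a compact set and these terms carry negative powers of $\eta$, choosing $\eta$ \emph{large} makes the whole expression nonpositive, uniformly in $0<\phi\le\phi_+$. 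On the compact part one uses positivity of $F$ and continuity to absorb everything for $\eta$ large. This shows $\calN(\phi_+,W(\phi))\le 0$, i.e. $\phi_+$ is a global supersolution with $\phi_+ - \eta\in W^{2,p}_\gamma$.

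The main obstacle is bookkeeping the $\eta$-dependence correctly: the subsolution-coupling through $W(\phi)$ means the $|\sigma+\calD W|^2$ term scales like $\phi_+^{2N}\cdot\phi_+^{-N-1} = \phi_+^{N-1}\sim\eta^{N-1}$ in its worst piece, exactly the same growth rate as the $\tau^2\phi_+^{N-1}$ term, so the argument genuinely requires the pointwise inequality $C r^{2\delta-2}\|d\tau\|^2 < b_n\tau^2$ rather than any integrated or small-$\eta$ trick — this is where the precise form of the hypothesis, with the weight $r^{2\delta-2}$ matched to the decay of $|\calD W|^2$, is used, and where the constant $B$ must be pinned down. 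A secondary technical point is ensuring the conformal change making $R\ge 0$ does not disturb the weighted-space setup; this is standard (the Yamabe-positive representative differs from $g$ by a conformal factor tending to a positive constant at each end) and I would invoke it without detailing.
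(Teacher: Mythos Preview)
Your route differs from the paper's. You reuse the far-from-CMC skeleton $\phi_+=\eta(1+\Psi)$, taking $\eta$ large instead of small; the paper instead builds $\phi_+$ as $\eta$ times a product of two auxiliary functions, the first solving the prescribed-scalar-curvature equation $(\Delta-c_nR)(1+u)-b_n\tau^2(1+u)^{N-1}=0$ and the second solving a linear equation with source $-c_n(\rho+|\sigma|^2)$. Both arguments turn on the same cancellation you identify: the dangerous $\eta^{N-1}r^{2\delta-2}\|d\tau\|^2$ piece of $|\calD W|^2\phi_+^{-N-1}$ is balanced against $-b_n\tau^2\phi_+^{N-1}$ via the near-CMC inequality, after which large $\eta$ handles the remaining terms. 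Your version is cleaner when it applies.

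There is, however, a genuine gap in your treatment of the $\rho$-term. You write the residual as $Cr^{2\delta-2}\eta^{-N/2}\|\rho\|$, which tacitly uses a pointwise bound $\rho\le\|\rho\|_{L^\infty_{2\delta-2}}r^{2\delta-2}$; but the hypothesis here is only $\rho\in L^p_{2\delta-2}$, so no pointwise decay is available and $c_n\rho\,\phi_+^{-N/2}$ cannot be absorbed into $-\eta F$ pointwise. The paper's more elaborate construction is designed for exactly this: by placing $\rho$ (and $|\sigma|^2$) in the source term for the second auxiliary function, the computation of $\calN(\phi_+,W(\phi))$ produces a term of the form $c_n\rho\,(-\eta+\eta^{-N/2}\cdot\mbox{bounded})$, which is nonpositive for large $\eta$ using only $\rho\ge0$. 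Your argument could be repaired by the same device---e.g.\ solve $(\Delta-c_nR)(1+\Psi)=-F-c_n\rho$ so that a $-c_n\eta\rho$ term appears---but as written it establishes the theorem only under the stronger assumption $\rho\in L^\infty_{2\delta-2}$.
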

\begin{proof}
We first claim that we can choose $u \in W^{2,p}_\gamma$ such that
\[
(\Delta - c_n R) (1+u)  - b_n \tau^2 (1+u)^{N-1} = 0.
\]
This prescribed scalar curvature problem has a solution by \cite[Sec VII]{const-asympt-flat} and $1+u > 0$ by
the maximum principle. Next define $v \in W^{2,p}_{2\delta}$ by
\[
\nabla ((1+u)^2\nabla v) - b_n \tau^2 (1+v) =  - c_n (\rho + |\sigma|^2);
\]
its existence and uniqueness is guaranteed by Proposition \ref{veclapfred}, and as before, $1+v > 0$.
(Strictly speaking, we have  only stated that result for $P = \Delta - c_n R$, but the proof applies
equally well to this operator.) Now set $\phi_+ = \eta uv$, where the constant $\eta$ is chosen below. We calculate that
\[
u( \Delta - c_n R) \phi_+ = \eta ( - c_n \rho - c_n |\sigma|^2 +  b_n \tau^2 v + b_n \tau^2 u^{N}v),
\]
so for any $0< \phi < \phi_+$,
\begin{multline*}
u \, \calN(\phi_+, W(\phi)) = \eta (  - c_n\rho -c_n|\sigma|^2 +  b_n \tau^2 v + b_n \tau^2 u^{N}v)  \\
- (\eta v)^{N-1} b_n \tau^2 u^N + c_n |\sigma + \calD W|^2 u^{-N}(\eta v)^{-N-1} +
c_n (\eta v)^{-\frac{N}{2}}\rho u^{-\frac{N-2}{2}} \\
\leq b_n \tau^2 ( \eta u^N v + \eta v - (\eta v)^{N-1}u^N)  -c_n \eta (\rho + |\sigma|^2) \\
+ 2 c_n(|\sigma|^2  + |\calD W|^2) (\eta v)^{-N-1} u^{-N} + c_n \eta^{-\frac{N}{2}}\rho u^{-\frac{N-2}{2}}v^{-\frac{N}{2}}.
\end{multline*}
By \eqref{DWestim} and the inequality $\phi< \phi_+$, we have
\begin{multline*}
|\calD W|^2 \leq C r^{2\delta-2} ( (\sup \phi)^{2N}||d \tau||_{L^p_{\delta-1}}
+ ||J||_{L^p_{\delta-2}})^2  \\
\leq C' r^{2\delta-2} ( (\eta u v)^{2N}||d \tau||_{L^p_{\delta-1}}^2 +  ||J||^2_{L^p_{\delta-2}}),
\end{multline*}
and this leads to the estimate
\begin{eqnarray*}
u \, \calN(\phi_+, W(\phi))
&\leq&  c_n |\sigma|^2 \left(-\eta + 2 \eta^{-N-1}v^{-N-1}u^{-N}\right)\\
&+& c_n \rho \left( -\eta + \eta ^{-\frac{N}{2}}v ^{-\frac{N}{2} }u^{\frac{-N-2}{2}}\right)\\
&+& \eta^{N-1}\left(-\frac{b_n}{3}  \tau^2 +C_1 ||d \tau||_{L^p_{\delta-1}}^2 r^{2\delta -2} \right) v^{N-1} u^N \\
&+& \left(-\frac{b_n}{3} \eta ^{N-1} \tau^2 v^{N-1} u^N  +\eta v b_n \tau^2 +\eta v b_n \tau^2 u^N\right)\\
&+& \left(-\frac{b_n}{3}  \eta^{N-1} \tau^2 v^{N-1} u^N + C_2 \eta^{-N-1} v^{-N-1} u^{-N} r ^{2\delta -2}  ||J||_{L^p_{\delta-2}}\right).
\end{eqnarray*}
All five terms here can be made negative. Indeed, the constant $C_1$ in the third term depends on the constant in \eqref{DWest},
so we can apply the near-CMC assumption hypothesis here; the other terms are negative so long as  $\eta$ is sufficiently large.
\end{proof}

We now turn to the construction of a global subsolution. This turns out to be the same for both the far-CMC and near-CMC cases.
\begin{theorem}{\bf (Global Subsolution) }\label{gss}
Let $(M, g, \sigma, \tau, \rho, J)$ be a set of conformal data satisfying the hypotheses of either Theorem \ref{FarC} or Theorem
\ref{NearC}. Let $\psi \in W^{2,p}_\delta$ be chosen so that $1 + \psi > 0$ and $\tilde g = (1+\psi)^{N-2} g$ has scalar
curvature $R_{\tilde g} = -\frac{n-1}{n} \tau^2$ (see \cite[Sec VII]{const-asympt-flat} ). Then $\alpha (1+\psi)$ is a global subsolution
for any $0 < \alpha \leq 1$.
\end{theorem}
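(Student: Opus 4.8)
The plan is to check directly that $\phi_- = \alpha(1+\psi)$ satisfies $\calN(\phi_-, W(\phi)) \geq 0$ for every $\phi \in L^p$ with $\phi \geq \phi_-$ and every $0 < \alpha \le 1$, using the defining property of the conformally rescaled metric $\tilde g = (1+\psi)^{N-2}g$. First I would substitute $\phi_- = \alpha(1+\psi)$ into equation (i) of \eqref{ce2} and use the conformal transformation law for the scalar curvature: since $\tilde g = (1+\psi)^{N-2}g$ (note $N - 2 = 4/(n-2)$, so this is the standard Yamabe-type conformal factor), one has
\[
\Delta_g(1+\psi) - c_n R_g (1+\psi) = - c_n R_{\tilde g}\, (1+\psi)^{N-1} = \tfrac{n-1}{n}\,c_n \tau^2 (1+\psi)^{N-1}.
\]
Recalling that $\tfrac{n-1}{n} c_n = b_n$ (indeed $b_n = \tfrac{n-2}{4n} = \tfrac{n-1}{n}\cdot\tfrac{n-2}{4(n-1)} = \tfrac{n-1}{n}c_n$), this says $(\Delta_g - c_n R_g)(1+\psi) = b_n \tau^2 (1+\psi)^{N-1}$. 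Multiplying by $\alpha$ gives $(\Delta_g - c_n R_g)\phi_- = b_n \tau^2 \alpha (1+\psi)^{N-1}$.

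Now I would plug this into $\calN(\phi_-, W(\phi))$ directly:
\begin{multline*}
\calN(\phi_-, W(\phi)) = b_n \tau^2 \alpha (1+\psi)^{N-1} + c_n |\sigma + \calD W|^2 \phi_-^{-N-1} \\
- b_n \tau^2 \phi_-^{N-1} + c_n \rho \,\phi_-^{-N/2}.
\end{multline*}
The first and third terms combine to $b_n \tau^2 \bigl(\alpha (1+\psi)^{N-1} - \alpha^{N-1}(1+\psi)^{N-1}\bigr) = b_n\tau^2 (1+\psi)^{N-1}(\alpha - \alpha^{N-1})$, which is $\geq 0$ because $0 < \alpha \le 1$ and $N - 1 > 1$ force $\alpha^{N-1} \le \alpha$. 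The remaining two terms, $c_n |\sigma + \calD W|^2 \phi_-^{-N-1}$ and $c_n \rho\, \phi_-^{-N/2}$, are manifestly nonnegative since $\rho \ge 0$, $\phi_- > 0$, and $|\sigma + \calD W|^2 \ge 0$. Hence $\calN(\phi_-, W(\phi)) \ge 0$, which is exactly the global subsolution condition.

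There is no serious obstacle here; the only point requiring care is that $\phi_-$ is genuinely positive and lies in the correct function space so that $\calN(\phi_-, W(\phi))$ and the inequality make sense. Positivity is given ($1 + \psi > 0$ by hypothesis and $\alpha > 0$), and $\phi_- - \alpha \in W^{2,p}_\delta$ since $\psi \in W^{2,p}_\delta$, so $\phi_-$ has the same admissible form as the supersolution; the pointwise bound \eqref{weightest} ensures $1+\psi$ is bounded away from zero near infinity (it tends to $1$), so all negative powers of $\phi_-$ are controlled. The fact that the subsolution condition only requires $\phi \geq \phi_-$ (with no upper bound on $\phi$) is harmless because $W(\phi)$ enters only through the nonnegative term $|\sigma + \calD W|^2 \phi_-^{-N-1}$; the sign of $\calN$ at $\phi_-$ does not degrade as $\phi$ grows. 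I would also remark that the existence of $\psi$ with the stated scalar curvature property is precisely the prescribed-scalar-curvature result cited from \cite[Sec VII]{const-asympt-flat}, valid here because $-\tfrac{n-1}{n}\tau^2 \le 0$ and this negative target is always attainable in a Yamabe-positive conformal class.
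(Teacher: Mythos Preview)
Your proof is correct and follows essentially the same route as the paper's own argument: substitute $\phi_- = \alpha(1+\psi)$, use the conformal scalar curvature identity $(\Delta_g - c_nR_g)(1+\psi) = b_n\tau^2(1+\psi)^{N-1}$, and observe that the resulting expression $b_n\tau^2(1+\psi)^{N-1}(\alpha-\alpha^{N-1}) + c_n|\sigma+\calD W|^2\phi_-^{-N-1} + c_n\rho\,\phi_-^{-N/2}$ is a sum of nonnegative terms. Your additional remarks (the explicit check that $\tfrac{n-1}{n}c_n=b_n$, the function-space considerations, and the observation that the bound $\phi\geq\phi_-$ is never used) only elaborate on details the paper leaves implicit.
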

\begin{proof}
With this definition of $\psi$, let $\phi_- = \alpha (1+\psi)$. Then
\begin{multline*}
\calN(\phi_- ,W(\phi) ) = b_n \tau^2 (1+\psi)^{N-1} (\alpha - \alpha^{N-1}) \\
+ |\sigma + \calD W(\phi)|^2 (\alpha(1+ \psi))^{-N-1} +
c_n\rho (\alpha(1+ \psi))^{-\frac{N}{2}} \geq 0,
\end{multline*}
as required. Note that this does not even require that $\phi \geq \phi_-$.
\end{proof}

Since $\phi_- \to \alpha$ and $\phi_+ \to \eta$ at infinity, and both are strictly positive, we can choose $\alpha$
sufficiently small so that $\phi_- < \phi_+$ everywhere.

\section{Fixed point Theorem and proof of the main results}
\label{sec:FixedPoint}
Just as for the analogous far-CMC results on closed manifolds \cite{closed-fcmc} and \cite{vac-fcmc}, once the existence of
global sub- and supersolutions has been established, then the existence of a solution $(\phi, W)$ to \eqref{ce2}
is obtained using the Schauder fixed point theorem. Since this proof is quite similar to the one for closed
manifolds, we only sketch it here.
\begin{theorem}
For AE conformal data sets satisfying the hypotheses of Theorem \ref{FarC} or Theorem \ref{NearC},
there exists a solution $(\phi, W)$ to \eqref{ce2}, with $\phi_-\leq \phi \leq \phi_+$.
Moreover, on each end $E_j$ of $M$, $\phi- A_j \in W^{2,p}_\gamma$ for some constant $A_j$ on each end
$E_j$ of $M$.
\end{theorem}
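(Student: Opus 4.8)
The plan is to run the standard sub-/supersolution iteration scheme adapted to the AE setting, exploiting the barriers $\phi_- \le \phi_+$ constructed in Section \ref{sec:barriers}. First I would fix the ordered pair of barriers and work in the closed convex set
\[
U = \{\, u \in W^{2,p}_\delta : \phi_- \le \alpha + u \le \phi_+ \,\}\subset W^{2,p}_\delta,
\]
writing $\phi = A + u$ where $A$ is the asymptotic constant; more precisely, since $\phi_+ \to \eta$ and $\phi_- \to \alpha$ on each end, the solution we seek will have a possibly different constant $A_j$ on each end, so I would build the affine space of functions that are constant on each end modulo $W^{2,p}_\delta$ and intersect it with the barrier-sandwiched region. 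For $v$ in this set, define $W = W(v)$ via Proposition \ref{boundDW} (this requires that $(M,g)$ admit no conformal Killing fields; in the positive-Yamabe AE case, as used already in the super-solution theorems, this holds — or one invokes it as part of the standing hypotheses), and then solve the semilinear scalar equation $\calN(\phi, W(v)) = 0$ for $\phi$ with $\phi_- \le \phi \le \phi_+$. Because $\sigma + \calD W(v)$ and $\rho$ are now fixed coefficient functions, equation (i) of \eqref{ce2} is a scalar Lichnerowicz-type equation with the $u^{N-1}$ term having the correct (defocusing) sign; the method of sub- and supersolutions for semilinear elliptic equations on AE manifolds (e.g.\ monotone iteration between $\phi_-$ and $\phi_+$, or a variational/degree argument as in \cite{const-asympt-flat}) yields a unique such $\phi$ in the interval, and elliptic regularity plus the mapping theory of Proposition \ref{veclapfred} give $\phi - A_j \in W^{2,p}_\gamma$ on each end. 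This defines the fixed-point map $T: v \mapsto \phi$.

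Next I would verify the hypotheses of the Schauder fixed point theorem for $T$. The map $T$ sends the (closed, bounded, convex) barrier set into itself by construction of the barriers: $\calN(\phi_+, W(v)) \le 0$ and $\calN(\phi_-, W(v)) \ge 0$ for all admissible $v$ (this is exactly why the barriers were defined as \emph{global} sub/supersolutions, quantified over all $\phi$ in the relevant range). Compactness comes from the two-step structure: $v \mapsto W(v)$ is continuous $L^\infty$-bounded-sets $\to W^{2,p}_\delta$ by the a priori estimate \eqref{injbound} and the smooth dependence of $\frac{n-1}{n}v^N\nabla\tau + J$ on $v$; then $W \mapsto \phi$ gains derivatives by the scalar a priori estimate, so the composite lands in a bounded subset of $W^{2,p}_\delta$ (modulo the end-constants), which embeds compactly into $W^{1,p'}_{\delta'}$ for slightly weaker exponents — giving the compactness needed. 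Continuity of $T$ in the weaker topology follows by the usual argument (uniqueness of the solution to the scalar problem in the barrier interval rules out bifurcation, so a sequential-continuity argument closes). Schauder then produces a fixed point $\phi \in [\phi_-,\phi_+]$, and setting $W = W(\phi)$ gives a solution $(\phi, W)$ of the full coupled system \eqref{ce2}.

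Finally, the asymptotic statement: since $\phi_- \le \phi \le \phi_+$ with $\phi_\pm \to$ positive constants on each end, $\phi$ is bounded and bounded away from zero, so $|\sigma + \calD W|^2 \phi^{-N-1}$, $\tau^2 \phi^{N-1}$ and $\rho\phi^{-N/2}$ all lie in $L^p_{2\delta-2} = L^p_{\gamma-2}$ (using $W \in W^{2,p}_\delta$, the weighted Sobolev multiplication properties, and the assumed decay of $\tau,\sigma,\rho$); writing equation (i) as $(\Delta - c_n R)\phi = -[\,\text{those terms}\,]$ and applying Proposition \ref{veclapfred} (or \ref{samas}) with the convention that constants on each end lie in the kernel at infinity, we get $\phi - A_j \in W^{2,p}_\gamma$ on $E_j$, with $A_j > 0$. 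I expect the main obstacle to be the bookkeeping around the \emph{multiple, distinct} end-constants $A_j$: the iteration naturally lives in an affine space modelled on $W^{2,p}_\delta$ but the "constant part" is an element of $\RR^\ell$ that is itself determined by the fixed-point process, so one must set up the Banach space and the convex set carefully (e.g.\ fixing a smooth cutoff $\chi_j$ supported on $E_j$ and writing $\phi = \sum_j A_j\chi_j + u$, $u \in W^{2,p}_\delta$, with $(A_1,\dots,A_\ell,u)$ ranging over a compact convex set) so that Schauder applies cleanly; the analytic inputs (barriers, mapping properties, compact embeddings) are all already in hand.
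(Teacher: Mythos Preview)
Your approach is correct and follows the same overall strategy as the paper: define the composite map $\phi \mapsto W(\phi) \mapsto T(W(\phi))$, where $T$ solves the scalar Lichnerowicz equation for fixed $W$, then apply the Schauder fixed point theorem using the global barriers to confine the iteration. The difference lies in the choice of ambient space. You propose running the fixed-point argument in an affine space modelled on $\RR^\ell \oplus W^{2,p}_\delta$, carrying the end-constants $A_j$ along throughout, and you correctly flag this as the main bookkeeping nuisance. The paper sidesteps this entirely by working in $\calC^0$: the convex set is simply $\calS = \{\phi \in \calC^0_+ : \phi_- \le \phi \le \phi_+\}$, compactness comes from the embedding $\RR \oplus W^{2,p}_\delta \hookrightarrow \calC^0$ applied to the decomposition $\phi = A + \hat\phi$ of each $T(W)$, and continuity of $T$ is obtained via the implicit function theorem rather than a uniqueness-plus-subsequence argument. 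The end-constants are then recovered only \emph{a posteriori}: one rewrites the Lichnerowicz equation for the fixed point as $\Delta \phi = f \in L^p_{\delta-2}$, subtracts a $W^{2,p}_\delta$ solution, and observes that the bounded harmonic remainder tends to a constant on each end. Your route would work but is heavier; the paper's buys a cleaner iteration at the price of a short separate argument for the asymptotics.
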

\begin{proof}
Let $\calC^0_+$ denote the set of strictly positive bounded functions on $M$. If $\phi \in \calC^0_+$, then
by Proposition \ref{veclapfred}, the vector field $W(\phi) \in W^{2,p}_{\delta}$ is well-defined.  Next, let $T(W)$
be the solution $\phi$ to $\calN(\phi, W) = 0$ for any $W \in W^{2,p}_\delta$. This map is also well-defined.
We claim that any $\phi = T(W)$ can be decomposed as $\phi = A_j + \hat{\phi}$ on each end of $M$,
where $\hat{\phi} \in W^{2,p}_{\delta}$. (Thus if we let $A$ be a smooth function which equals $A_j$
on each end, then $\phi = A + \hat{\phi}$.) Granting this for the moment, let $S$ denote the compact
inclusion $\RR \oplus W^{2,p}_\delta \hookrightarrow \calC^0$.
A solution $(\phi, W)$ to \eqref{ce2} corresponds to a fixed point of the mapping $Q = S \circ T \circ W$. The continuity
of $W$ and $S$ are obvious, while the continuity of $T$ follows from the implicit function theorem. Up to
the claim about the decomposition of $\phi$ stated above, this proves that $Q$ is a continuous compact mapping.

Define the bounded convex set $\calS: = \{\phi \in \calC^0_+ : \phi_-\leq \phi \leq \phi_+\}$. By construction,
$Q$ maps $\calS$ to itself, and hence $Q(\calS)$ is relatively compact. Denote by $\calH$ its closed convex hull.
Thus $\calH \subset \calS$, and $Q: \calH \to \calH$. By the Schauder fixed point theorem, $\calH$ contains a
fixed point $\phi$ of $Q$. Standard estimates imply that $\phi$ and $W(\phi)$ both have the desired regularity.

The proof is finished once we prove that $T(W) = \phi = A + \hat{\phi}$, as claimed earlier. For this we rewrite the
Lichnerowicz equation as $\Delta u = f \in L^p_{\delta-2}$, where $f$ incorporates a number of terms involving $u$.
Since $\Delta: W^{2,p}_\delta \to L^p_{\delta-2}$ is an isomorphism, there exists a function $\hat{u} \in W^{2,p}_\delta$
such that $\Delta \hat{u}$ is equal to this same function $f$, hence $w = u - \hat{u}$ is a bounded harmonic
function on $M$. It is well known that on a manifold with asymptotically Euclidean ends, any such function tends
to a constant on each end. Furthermore, given any constants $A_j$, there is a bounded harmonic function
which tends to $A_j$ on $E_j$.  Indeed, define $A = \sum \chi_j A_j$ as above, where each $\chi_j$ is a cutoff
function which equals $1$ on the end $E_j$ and vanishes elsewhere. Then $\Delta A \in L^p_{\delta-2}$, so there
exists a function $\hat{A} \in W^{2,p}_\delta$ such that $\Delta\hat{A} = \Delta A$, so $A - \hat{A}$ is the bounded
harmonic function in question.
\end{proof}

\subsection*{Acknowledgments}
All  authors  are partially supported by the NSF FRG grant DMS-1263431. As well,  RM is partially supported by NSF DMS-1105050, and JI by PHY-1306441. We thank MSRI for support during the period in which this research was carried out.
The authors thank Michael Holst and David Maxwell for helpful conversations.

\end{document}